\DeclarePairedDelimiter\floor{\lfloor}{\rfloor}
\DeclareMathOperator*{\argmin}{\arg\!\min}
\tikzset{
    invisible/.style={opacity=0},
    visible on/.style={alt={#1{}{invisible}}},
    alt/.code args={<#1>#2#3}{%
      \alt<#1>{\pgfkeysalso{#2}}{\pgfkeysalso{#3}}%
  }
}
\newtheorem{theorem}{Theorem}
\newtheorem{proposition}{Proposition}[section]
\newtheorem{example}[proposition]{Example}
\newtheorem{lemma}[proposition]{Lemma}
\theoremstyle{break} 
\newenvironment{proof}%
{{\par\noindent \bf Proof. \nobreak}}%
{\nobreak \removelastskip \nobreak \hfill $\Box$ \medbreak}
{{\par\noindent \bf Proof \nobreak}}%
{\nobreak \removelastskip \nobreak \hfill $\Box$ \medbreak}
{{\par\noindent \bf Proof lemma. \nobreak}}%
{\nobreak \removelastskip \nobreak \bf End proof lemma. \medbreak}
\newenvironment{remark}{\par \medskip \noindent {\bf Remark. }\nobreak}{\par \medskip}
\def\paragraph#1{{\bf #1\ }}
\newcommand{\Var}{\mathrm{Var}}
\newcommand{\dd}{\mathrm{d}}
\title{From Gini index as a Lyapunov functional to convergence in Wasserstein distance}
\author{Fei Cao \footnotemark[1]}
\begin{document}
\maketitle

\footnotetext[1]{University of Massachusetts Amherst - Department of Mathematics and Statistics, Amherst, MA 01003, USA}

\tableofcontents

\begin{abstract}
In several recent works on infinite-dimensional systems of ODEs \cite{cao_derivation_2021,cao_explicit_2021,cao_iterative_2024,cao_sticky_2024}, which arise from the mean-field limit of agent-based models in economics and social sciences and model the evolution of probability distributions (on the set of non-negative integers), it is often shown that the Gini index serves as a natural Lyapunov functional along the solution to a given system. Furthermore, the Gini index converges to that of the equilibrium distribution. However, it is not immediately clear whether this convergence at the level of the Gini index implies convergence in the sense of probability distributions or even stronger notions of convergence. In this paper, we prove several results in this direction, highlighting the interplay between the Gini index and other popular metrics, such as the Wasserstein distance and the usual $\ell^p$ distance, which are used to quantify the closeness of probability distributions.
\end{abstract}

\noindent {\bf Key words: Econophysics; Gini index; Lyapunov functional; Mean-field ODE system; Wasserstein distance}

\section{Introduction and motivation}\label{sec:sec1}
\setcounter{equation}{0}

In recent years, infinite dimensional system of ODEs arising from the mean-field limits of stochastic agent-based models motivated from economic and social sciences are ubiquitous \cite{bassetti_mean_2015,cao_binomial_2022,cao_derivation_2021,cao_interacting_2022,cao_iterative_2024,cao_sticky_2024,cao_uncovering_2022,cao_uniform_2024,dragulescu_statistical_2000}.
Typically, infinite dimensional ODE systems investigated in the aforementioned work take the following generic form
\begin{equation}\label{eq:mean-field}
{\bf p}' = Q[{\bf p}],
\end{equation}
where ${\bf p} = (p_0,p_1,\ldots,p_n,\ldots)$ with ${\bf p}(t=0) \in \mathcal{P}(\mathbb N)$, and $Q \colon \mathcal{P}(\mathbb N) \to \mathcal{P}(\mathbb N)$ is a model-dependent operator (either linear or nonlinear). Moreover, it is often the case that the evolution system \eqref{eq:mean-field} preserves the total probability mass and the mean value, meaning that
\begin{equation}\label{eq:conservation_mass_mean_value}
\sum_{n=0}^\infty p'_n =0 \quad \text{and} \quad \sum_{n=0}^\infty n\,p'_n =0.
\end{equation}
Consequently, the solution ${\bf p}(t)$ lives in the space of probability distributions on $\mathbb N$ with the given mean value $\mu \coloneqq \sum_{n\geq 0} n\,p_n(0) \in \mathbb{R}_+$, defined by
\begin{equation}\label{eq:S_mu}
\mathcal{V}_\mu \coloneqq \left\{{\bf p} \mid \sum_{n=0}^\infty p_n =1,~p_n \geq 0,~\sum_{n=0}^\infty n\,p_n =\mu\right\}.
\end{equation}
We now briefly recall a series of relevant work below as motivating examples.

\begin{example}[The rich-biased dollar exchange model \cite{cao_derivation_2021}]\label{ex:1}
In many econophysics models where agents are only allowed to have non-negative integer-valued wealth (thus no debt is permitted), the mean-field ODE system is of the form \eqref{eq:mean-field}. In this context, $p_n(t)$ represents the probability that a typical agent has $n$ dollars at time $t$ and $\mu$ denotes the (conserved) average amount of dollars per agent. As a concrete example, the (mean-field) rich-biased exchange model introduced and investigated in \cite{cao_derivation_2021} is governed by the following nonlinear ODE system of infinite dimension:
\begin{equation}
\label{eq:Q_rich_biased}
p'_n = \left\{
      \begin{array}{ll}
        p_1-\overline{w}\,p_0 & \quad \text{for } n=0, \\
        \frac{p_{n+1}}{n+1} + \overline{w}\,p_{n-1} - \left(\frac{1}{n}+\overline{w}\right)\,p_n & \quad \text{for } n \geq 1,
      \end{array}
    \right.
  \end{equation}
in which $\overline{w} = \overline{w}[{\bf p}] \coloneqq \sum_{n =1}^\infty \frac{p_n}{n}$.
\end{example}

\begin{example}[The iterative persuasion-polarization opinion model \cite{cao_iterative_2024}]\label{ex:2}
In the context of opinion dynamics and in the mean-field region where the number of agents is sent to infinity, $p_n(t)$ represents the probability that a typical agent's opinion level (towards a given topic) at time $t$ equals to $n$, and $\mu$ denotes the (conserved) mean opinion among a large society. As a specific instance, the (mean-field) iterative persuasion-polarization opinion dynamics proposed and studied in \cite{cao_iterative_2024} is given by the following finite dimensional system of nonlinear ODEs (under certain parameter choices):
\begin{equation}\label{eqn:Q_ipp}
p'_n =  \left\{
      \begin{array}{ll}
p_0\,p_1 - p_0\,(1-p_0) & \quad \text{for } n=0,\\
p_{n-1}\,\sum_{j=n}^{2k} p_j + p_{n+1}\,\sum_{j=0}^{n} p_j  - p_n\,(1-p_n) & \quad \text{for } 0<n<2k, \\
p_{2k}\,p_{2k-1} - p_{2k}\,(1-p_{2k}) & \quad \text{for } n=2k,
\end{array}
    \right.
\end{equation}
where $k \in \mathbb{N}_+$ is arbitrary but fixed, ${\bf p}(t) = (p_0(t),\ldots,p_{2k}(t) \in \mathcal{P}(\{0,1,\ldots,2k\})$ and $\sum_{n=0}^{2k} n\,q_n(t) = \mu$ for all $t\geq 0$ with $\mu \in (0,2k)$.
\end{example}

\begin{example}[The sticky dispersion model on complete graphs \cite{cao_sticky_2024}]\label{ex:3}
In the context of dispersion processes (on complete graphs) introduced first in the literature on interacting particle systems \cite{cooper_dispersion_2018,de_dispersion_2023}, the investigation of the associated mean-field limit is very recent \cite{cao_quantitative_2024,cao_sticky_2024}. In the mean-field region where the total number of particles approaches to infinity, $p_n(t)$ represents the probability that a typical site hosts $n$ particles at time $t$ and $\mu$ denotes the (conserved) average number of particles per site. Now we recall the (mean-field) sticky dispersion model on complete graphs examined in \cite{cao_sticky_2024}:
\begin{equation}\label{eq:Q_sticky_dispersion}
p'_n = \left\{
    \begin{array}{ll}
      -\left(\mu-1+p_0\right)\,p_0 & \text{for } n=0, \\
      n\,p_{n+1}+\left(\mu-1+p_0\right)\,p_{n-1}- (n-1)\,p_n - \left(\mu-1+p_0\right)\,p_n & \text{for } n\geq 1.
    \end{array}
  \right.
\end{equation}
Here the most relevant region of the parameter $\mu >0$ for our follow-up discussion will be $\mu \in (0,1]$, as the large time behavior of the system \eqref{eq:Q_sticky_dispersion} changes drastically when $\mu > 1$ \cite{cao_sticky_2024}.
\end{example}

In several concrete mean-field ODE systems of the form \eqref{eq:mean-field} considered in \cite{cao_iterative_2024,cao_sticky_2024}, and in particular the situation encountered in Example \ref{ex:2} and Example \ref{ex:3} above, the so-called Gini index, defined via
\begin{equation}\label{def1:Gini}
G[{\bf p}] = \frac{1}{2\,\mu} \sum\limits_{i\in \mathbb N}\sum\limits_{j \in \mathbb N} |i-j|\,p_i\,p_j
\end{equation}
for any ${\bf p} \in \mathcal{P}(\mathbb N)$, is a Lyapunov functional along the solution of the system \eqref{eq:mean-field} for all $t\geq 0$. In other words, we have
\begin{equation}\label{eq:Gini_decay}
\frac{\dd}{\dd t} G[{\bf p}] \leq 0.
\end{equation}

\begin{remark}
The Gini index $G$ is a well-known inequality index which measures the (wealth) inequality of a given (one-dimensional) probability distribution and ranges from $0$ (for a wealth-egalitarian society) to $1$ (extreme inequality). Besides its economic origin, the notion of Gini index has also seen its application in many other fields other than economics, such as opinion/consensus models \cite{cao_iterative_2024,meng_fair_2023}, sparse representation of signals \cite{hurley_comparing_2009}, and the so-called classification and regression trees \cite{daniya_classification_2020}. Recently, extension of the classical Gini index to higher dimensions has also been explored \cite{auricchio_extending_2024} with applications to market economy.
\end{remark}

The unique equilibrium distribution, denoted by ${\bf p}^*$, associated to the mean-field system of ODEs provided in Example \ref{ex:2} and Example \ref{ex:3}, is given by the following shifted Bernoulli distribution supported on (at most) two spots:
\begin{equation}\label{eq:Bernoulli_equil}
p^*_{\floor*{\mu}} = 1-\mu + \floor*{\mu},~~~p^*_{\floor*{\mu}+1} = \mu - \floor*{\mu},~~~\text{and}~~~ p^*_n = 0 ~~\text{for $n \notin \{\floor*{\mu},1+\floor*{\mu}\}$},
\end{equation}
in which $\floor*{\mu}$ represents the integer part of $\mu$. The shifted Bernoulli equilibrium distribution ${\bf p}^*$ \eqref{eq:Bernoulli_equil} is actually a unique (global) minimizer of the Gini index \eqref{def1:Gini} over the set $\mathcal{V}_\mu$ of probability mass functions with mean $\mu$ \cite{cao_iterative_2024} (see Figure \ref{fig:1} for a basic illustration). This variational characterization of the Bernoulli-type (equilibrium) distribution ${\bf p}^*$ delivers a very clear economic intuition as explained in \cite{cao_iterative_2024}.
\begin{figure}[!htb]
\centering
\includegraphics[scale=0.6]{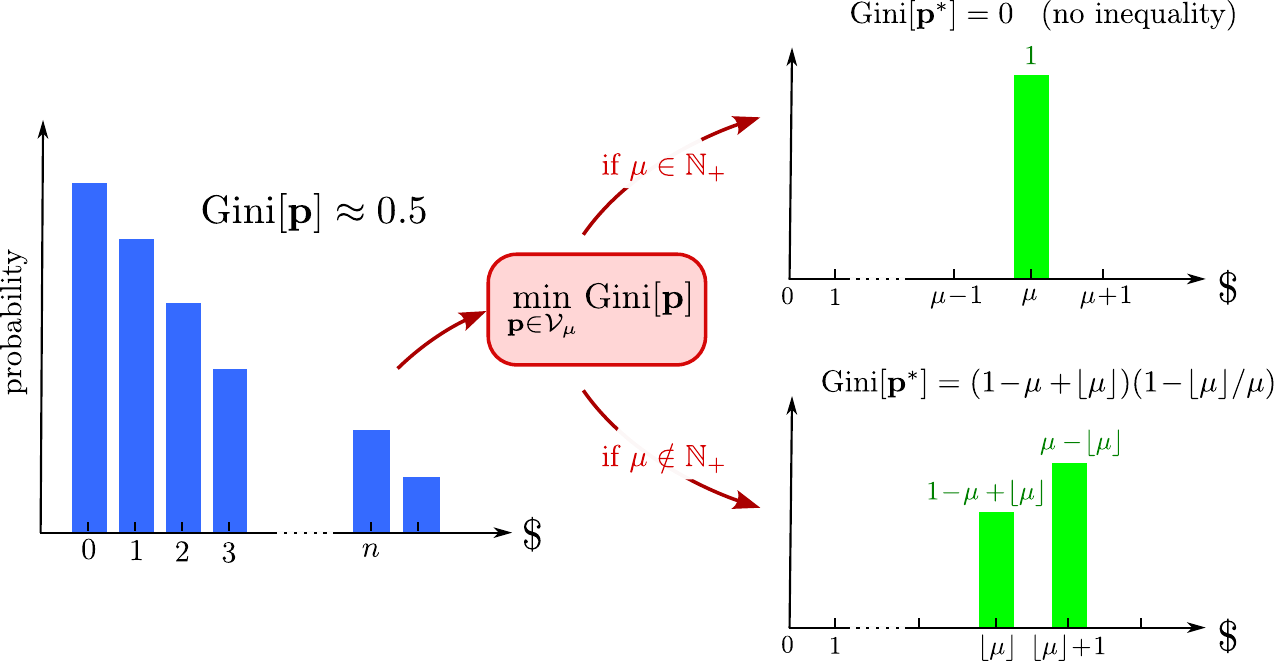}
\caption{Illustration of the variational characterization ${\bf p}^* = \argmin_{{\bf p} \in \mathcal{V}_\mu} G[{\bf p}]$ of the shifted Bernoulli-type distribution ${\bf p}^*$ \cite{cao_iterative_2024}. From an economic point of view, the distribution ${\bf p}^*$ represents the most egalitarian way (measured in terms of the Gini index) to distribute a large sum of money across a large population, subject to the constraints that each agent's wealth is a non-negative integer and the average wealth per agent equals some predetermined value $\mu \in \mathbb{R}_+$ .}
\label{fig:1}
\end{figure}
The key question which we aim to answer in this paper is: Assume that one can prove the convergence of the Gini index in the sense that
\begin{equation}\label{eq:conv_Gini}
G[{\bf p}(t)] - G[{\bf p}^*] = G[{\bf p}(t)] - \min\limits_{{\bf p} \in \mathcal{V}_\mu} G[{\bf p}] \xrightarrow{t \to \infty} 0.
\end{equation}
In which sense can we establish the convergence of ${\bf p}(t)$ towards its unique equilibrium ${\bf p}^*$~? Our main result in this manuscript can be summarized as follows:
\begin{theorem}\label{thm:main}
Assume that ${\bf p} \in \mathcal{V}_\mu$ with $\mu \in \mathbb{R}_+$, then
\begin{equation}\label{eq:Gini_Wasserstein_summary}
W_1({\bf p},{\bf p}^*) \leq \begin{cases}
2\,\mu\,\left(G[{\bf p}] - G[{\bf p}^*]\right), & \textrm{if ~$\mu \in \mathbb{N}_+$} \\
\frac{2\,\mu}{\min\{\mu-\floor*{\mu},\floor*{\mu}+1-\mu\}}\,\left(G[{\bf p}] - G[{\bf p}^*]\right), & \textrm{if ~$\mu \in \mathbb{R}_+ \setminus \mathbb{N}_+$}
\end{cases}
\end{equation}
where $W_1({\bf p},{\bf p}^*)$ represents the Wasserstein distance (of order $1$) \cite{santambrogio_optimal_2015} between ${\bf p}$ and the shifted Bernoulli distribution ${\bf p}^*$.
\end{theorem}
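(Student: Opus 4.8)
The strategy is to rewrite everything in terms of cumulative distribution functions. Set $F_n \coloneqq \sum_{k=0}^{n} p_k$ and let $F_n^*$ denote the CDF of ${\bf p}^*$; write $m \coloneqq \floor*{\mu}$ and $\theta \coloneqq \mu - m \in [0,1)$. From \eqref{eq:Bernoulli_equil} one reads off $F_n^* = 0$ for $n < m$, $F_m^* = 1-\theta$, and $F_n^* = 1$ for $n \geq m+1$. I will use three standard identities. First, for probability measures on $\mathbb{N}$, $W_1({\bf p},{\bf p}^*) = \sum_{n\geq 0}|F_n - F_n^*|$. Second, writing $|i-j|$ as the number of integers $n\geq 0$ with $\min(i,j) \leq n < \max(i,j)$ and expanding gives $\sum_{i,j}|i-j|\,p_i p_j = 2\sum_{n\geq 0} F_n(1-F_n)$, hence $G[{\bf p}] = \frac{1}{\mu}\sum_{n\geq 0}F_n(1-F_n)$, and likewise $G[{\bf p}^*] = \frac{\theta(1-\theta)}{\mu}$. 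Third, the constraint ${\bf p}\in\mathcal{V}_\mu$ is equivalent to $\sum_{n\geq 0}(1-F_n) = \mu$.

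\textbf{The core estimate.} Introduce the pivot $u \coloneqq F_m$ and the residual tail $S \coloneqq \sum_{n\geq m+1}(1-F_n)$; the mean constraint then reads $\sum_{n=0}^{m}F_n = (1-\theta)+S$ (empty sums being zero when $m=0$). Monotonicity of $F$ yields $\sum_{n=0}^{m-1}F_n(1-F_n)\geq (1-u)\sum_{n=0}^{m-1}F_n$, because $F_n \leq u$ for $n<m$, and $\sum_{n\geq m+1}F_n(1-F_n)\geq u\,S$, because $F_n \geq u$ for $n\geq m+1$. Substituting these into
\[
\mu\big(G[{\bf p}]-G[{\bf p}^*]\big) = \sum_{n=0}^{m-1}F_n(1-F_n) + \big(u(1-u)-\theta(1-\theta)\big) + \sum_{n\geq m+1}F_n(1-F_n)
\]
and simplifying with the constraint collapses the whole expression to the clean bound $\mu\big(G[{\bf p}]-G[{\bf p}^*]\big) \geq (1-\theta)(1-\theta-u) + S$. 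A parallel direct computation of $W_1({\bf p},{\bf p}^*) = \sum_{n<m}F_n + |F_m - (1-\theta)| + \sum_{n\geq m+1}(1-F_n)$, again using the constraint to eliminate $\sum_{n<m}F_n$, gives $W_1({\bf p},{\bf p}^*) = 2S$ when $u \geq 1-\theta$ and $W_1({\bf p},{\bf p}^*) = 2S + 2(1-\theta-u)$ when $u < 1-\theta$.

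\textbf{Conclusion.} It now suffices to prove the scalar inequality $\mu\big(G[{\bf p}]-G[{\bf p}^*]\big) \geq \lambda\, W_1({\bf p},{\bf p}^*)$ with $\lambda = \tfrac12$ if $\mu\in\mathbb{N}_+$ and $\lambda = \tfrac12\min\{\theta,1-\theta\}$ otherwise, which rearranges exactly into \eqref{eq:Gini_Wasserstein_summary}. If $u < 1-\theta$ this amounts to $\big((1-\theta)-2\lambda\big)(1-\theta-u) + (1-2\lambda)S \geq 0$, true since $\lambda \leq \tfrac12(1-\theta)$ and $\lambda \leq \tfrac12$. If $u \geq 1-\theta$ I use the elementary bound $u - (1-\theta) \leq S$, which follows from $S = \sum_{n=0}^{m-1}F_n + u - (1-\theta)$, to write $\mu(G-G^*) \geq S - (1-\theta)\big(u-(1-\theta)\big) \geq \theta\,S = \tfrac{\theta}{2}W_1 \geq \lambda W_1$. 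When $\mu$ is a positive integer, $\theta = 0$, so either ${\bf p} = {\bf p}^*$ and both sides vanish, or $u = F_m < 1$ and only the first case occurs, yielding the sharper $\lambda = \tfrac12$. As a by-product this re-proves that ${\bf p}^*$ minimizes $G$ over $\mathcal{V}_\mu$, the right-hand side of the core estimate being manifestly nonnegative.

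\textbf{Main obstacle.} The delicate point is the bookkeeping around the single site $n=m$, where $F_m$ may lie on either side of $F_m^* = 1-\theta$: this is what forces the two-case split, what produces the factor $\min\{\mu-\floor*{\mu},\floor*{\mu}+1-\mu\}$ in the non-integer regime, and what requires that the CDF-monotonicity estimates be oriented correctly and that the mean constraint $\sum_n(1-F_n)=\mu$ be applied consistently on both the Gini side and the Wasserstein side. The rest is elementary algebra once the reduction to $F_n$ is in place.
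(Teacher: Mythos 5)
Your proof is correct: I checked the CDF identities ($W_1=\sum_n|F_n-F_n^*|$, $2\mu\,G[{\bf p}]=2\sum_n F_n(1-F_n)$, $\sum_n(1-F_n)=\mu$), the algebra collapsing the Gini gap to $\mu\,(G[{\bf p}]-G[{\bf p}^*])\geq (1-\theta)(1-\theta-u)+S$, the two-case evaluation of $W_1$ as $2S$ or $2S+2(1-\theta-u)$, and the final scalar comparisons (including the integer case, where $u\geq 1-\theta$ forces ${\bf p}={\bf p}^*$), and all steps hold, including the boundary case $m=0$ with empty sums. The route differs from the paper's in presentation and in self-containedness: the paper bounds $W_1$ via the CDF but obtains the lower bound on the Gini gap by manipulating the double sum over the probability mass function, splitting into an integer case (Proposition \ref{prop:1}, which imports the inequality \eqref{eq:key} from \cite{cao_iterative_2024}) and a non-integer case (Proposition \ref{prop:2}), whereas you work entirely in CDF coordinates using the representation $G[{\bf p}]=\frac{1}{\mu}\sum_n F_n(1-F_n)$ (equivalent to Lemma \ref{lem:Gini_alternative_def}, which the paper only deploys for Theorem \ref{thm:2}) together with monotonicity of $F$, and you treat integer and non-integer $\mu$ uniformly with no external input. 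Interestingly, your intermediate bound is exactly the paper's: with $u=F_{\floor*{\mu}}$ and $S=\sum_{n\geq\floor*{\mu}+1}(1-F_n)$, the paper's two quantities in \eqref{eq:W1_upper_bound} are $S$ and $S-(u-(1-\theta))=\sum_{n\leq\floor*{\mu}-1}F_n$, and its pre-$C_\mu$ lower bound $\theta S+(1-\theta)\sum_{n\leq\floor*{\mu}-1}F_n$ equals your $S-(1-\theta)\bigl(u-(1-\theta)\bigr)$; so the two arguments are the same estimate in different coordinates, with yours buying a unified, self-contained derivation and the paper's staying closer to the mass-function formulation used in the motivating works. One cosmetic point: the right-hand side $(1-\theta)(1-\theta-u)+S$ is not \emph{manifestly} nonnegative when $u>1-\theta$; its nonnegativity (hence the by-product $G[{\bf p}]\geq G[{\bf p}^*]$) uses the small observation $u-(1-\theta)\leq S$ that you indeed supply.
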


On the other hand, sometimes one also encounters the situation where the Gini index is monotone increasing along the solution of the system \eqref{eq:mean-field} (observed numerically for the system described in Example \ref{ex:1} starting from certain initial datum), i.e., we have
\begin{equation}\label{eq:Gini_decay}
\frac{\dd}{\dd t} G[{\bf p}] \geq 0,
\end{equation}
leading to accentuated wealth inequality as time evolves (the so-called ``rich get richer'' phenomenon). In the extreme case where $G[{\bf p}] \to 1$ so that the wealth inequality approaches to its maximum possible value, we expect that the distribution ${\bf p}$ converges (in some sense) to a Dirac delta $\delta_0 \coloneqq (1,0,0,\ldots,0,\ldots)$ centered at $0$, which implies the appearance of the so-called ``oligarchy'' \cite{boghosian_oligarchy_2017} where a vanishing portion of agents own the entire fortune of the economic society (which is again observed numerically in the ODE model given in Example \ref{ex:1}). We aim to quantify the closeness of ${\bf p}$ towards the singular Dirac mass centered at $0$, in the setting that $G[{\bf p}] \to 1$ (numerically observed and analytically justified in many models from econophysics \cite{boghosian_h_2015,boghosian_oligarchy_2017,cohen_bounding_2023} where the state space is $\mathbb{R}_+$ instead of $\mathbb N$). In a nutshell, we provide a answer to the following question: Assume that one can prove the convergence of the Gini index to its maximum possible value in the sense that
\begin{equation}\label{eq:conv_Gini}
G[{\bf p}(t)] \xrightarrow{t \to \infty} 1.
\end{equation}
In which sense can we establish the convergence of ${\bf p}(t)$ towards the Dirac delta $\delta_0$ centered at $0$~? Our result is stated as follows:
\begin{theorem}\label{thm:2}
Assume that ${\bf p} \in \mathcal{V}_\mu$ with $\mu \in \mathbb{R}_+$, then
\begin{equation}\label{eq:Gini_Wasserstein_summary}
\|{\bf p} - \delta_0\|_{\ell^1} \leq 2\,\sqrt{\mu}\,\sqrt{1-G[{\bf p}]}.
\end{equation}
In particular, $\|{\bf p} - \delta_0\|_{\ell^1} \to 0$ if $G[{\bf p}] \to 1$.
\end{theorem}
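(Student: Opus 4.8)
The plan is to reduce everything to the single scalar quantity $1-p_0$ and to exploit a convenient rewriting of the Gini index in terms of the function $\min(i,j)$. First I would observe that since ${\bf p}\in\mathcal{V}_\mu\subset\mathcal{P}(\mathbb N)$ we have $0\le p_0\le 1$ and $\sum_{n\ge 1}p_n=1-p_0$, so that
\begin{equation}\label{eq:l1_identity_plan}
\|{\bf p}-\delta_0\|_{\ell^1}=|p_0-1|+\sum_{n\ge 1}p_n=2\,(1-p_0).
\end{equation}
Hence the claimed inequality is equivalent to $(1-p_0)^2\le \mu\,\bigl(1-G[{\bf p}]\bigr)$, and the whole proof becomes a matter of bounding $1-G[{\bf p}]$ from below by $(1-p_0)^2/\mu$.

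The key algebraic step is the identity
\begin{equation}\label{eq:min_identity_plan}
1-G[{\bf p}]=\frac{1}{\mu}\sum_{i\in\mathbb N}\sum_{j\in\mathbb N}\min(i,j)\,p_i\,p_j .
\end{equation}
To get this I would start from $\min(i,j)=\tfrac12\bigl((i+j)-|i-j|\bigr)$, multiply by $p_i p_j$ and sum, using the conservation of total mass and mean, namely $\sum_{i,j}(i+j)\,p_i\,p_j=\sum_i i\,p_i\sum_j p_j+\sum_i p_i\sum_j j\,p_j=2\mu$ together with $\sum_{i,j}|i-j|\,p_i\,p_j=2\mu\,G[{\bf p}]$ from the definition \eqref{def1:Gini}. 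Rearranging gives \eqref{eq:min_identity_plan}. (One should note the series are absolutely convergent because $\mu<\infty$, so all rearrangements are legitimate.)

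Once \eqref{eq:min_identity_plan} is in hand, the estimate is immediate: $\min(i,j)=0$ whenever $i=0$ or $j=0$, and $\min(i,j)\ge 1$ whenever both $i\ge1$ and $j\ge1$. Therefore
\begin{equation}\label{eq:diag_bound_plan}
\sum_{i\in\mathbb N}\sum_{j\in\mathbb N}\min(i,j)\,p_i\,p_j
=\sum_{i\ge1}\sum_{j\ge1}\min(i,j)\,p_i\,p_j
\ge\sum_{i\ge1}\sum_{j\ge1}p_i\,p_j
=\Bigl(\sum_{i\ge1}p_i\Bigr)^{\!2}=(1-p_0)^2 .
\end{equation}
Combining \eqref{eq:min_identity_plan} and \eqref{eq:diag_bound_plan} yields $\mu\,(1-G[{\bf p}])\ge (1-p_0)^2$, hence $1-p_0\le\sqrt{\mu}\,\sqrt{1-G[{\bf p}]}$, and plugging this into \eqref{eq:l1_identity_plan} gives $\|{\bf p}-\delta_0\|_{\ell^1}\le 2\,\sqrt{\mu}\,\sqrt{1-G[{\bf p}]}$, which is exactly \eqref{eq:Gini_Wasserstein_summary}; the final assertion follows by letting $G[{\bf p}]\to1$.

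There is no serious obstacle here: the argument is short and elementary. If anything, the only non-obvious move is recognizing the representation \eqref{eq:min_identity_plan} of $1-G[{\bf p}]$ and noticing that restricting the double sum to the ``both coordinates positive'' block produces precisely $(1-p_0)^2$; once that is seen, the square-root inequality drops out with no further work. It may also be worth remarking that the bound is tight in the regime of interest, since for distributions concentrating on $\{0\}$ and one large value $N$ with $p_N=\mu/N\to0$ one has $1-p_0\sim\mu/N$ while $\mu(1-G)\sim \mu^2/N^2$, so both sides of $(1-p_0)^2\le\mu(1-G[{\bf p}])$ are of the same order.
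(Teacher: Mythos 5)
Your proposal is correct, and its skeleton coincides with the paper's: both reduce the claim to the scalar bound $1-p_0\leq\sqrt{\mu\,(1-G[{\bf p}])}$ via $\|{\bf p}-\delta_0\|_{\ell^1}=2\,(1-p_0)$, and both obtain that bound by writing $1-G[{\bf p}]$ as an explicitly nonnegative sum and discarding all but the part that produces $(1-p_0)^2$. The only real difference is the form of the key identity: the paper proves (Lemma \ref{lem:Gini_alternative_def}) that $1-G[{\bf p}]=\frac{1}{\mu}\sum_{n\geq 0}(1-F_n)^2$ by a CDF/summation-by-parts computation and then keeps the $n=0$ term, whereas you use $1-G[{\bf p}]=\frac{1}{\mu}\sum_{i,j}\min(i,j)\,p_i\,p_j$, obtained in one line from $\min(i,j)=\tfrac12\bigl((i+j)-|i-j|\bigr)$, and then keep the block $i,j\geq 1$ via $\min(i,j)\geq 1$ there. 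These are the same identity in disguise, since $\min(i,j)=\sum_{n\geq 0}\mathbf{1}_{\{i\geq n+1\}}\,\mathbf{1}_{\{j\geq n+1\}}$ gives $\sum_{i,j}\min(i,j)\,p_i\,p_j=\sum_{n\geq 0}(1-F_n)^2$, and dropping the $n=0$ term is exactly your bound $\min(i,j)\geq\mathbf{1}_{\{i\geq1\}}\mathbf{1}_{\{j\geq1\}}$; your derivation of the identity is arguably shorter and more elementary, while the paper's CDF form has the side benefit of making the full family of bounds $\mu\,(1-G[{\bf p}])\geq(1-F_n)^2$ for every $n$ immediately visible. Your closing remark on tightness is a nice addition not present in the paper.
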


\section{Proof of main results}

\subsection{Proof of Theorem \ref{thm:main}}

This section is devoted to the proof of Theorem \ref{thm:main}. We split the proof into two parts (for the ease of presentation), depending on the whether $\mu$ is an integer or not. We first establish the following result when $\mu \in \mathbb{N}$ is an integer.
\begin{proposition}\label{prop:1}
Assume that ${\bf p} \in \mathcal{V}_\mu$ with $\mu \in \mathbb{N}_+$, then
\begin{equation}\label{eq:Gini_Wasserstein}
W_1({\bf p},\delta_\mu) = W_1({\bf p},{\bf p}^*) \leq 2\,\mu\,G[{\bf p}] = 2\,\mu\,\left(G[{\bf p}] - G[{\bf p}^*]\right),
\end{equation}
where $\delta_\mu$ denotes the Dirac delta distribution centered at $\mu$.
\end{proposition}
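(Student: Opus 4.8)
The plan is to use the fact that, when $\mu \in \mathbb{N}_+$, the shifted Bernoulli distribution ${\bf p}^*$ of \eqref{eq:Bernoulli_equil} degenerates to the Dirac mass $\delta_\mu$: indeed $p^*_{\floor*{\mu}} = 1-\mu+\floor*{\mu} = 1$ and $p^*_{\floor*{\mu}+1} = \mu-\floor*{\mu} = 0$. In particular $G[{\bf p}^*] = G[\delta_\mu] = \frac{1}{2\mu}|\mu-\mu| = 0$, so the rightmost equality in \eqref{eq:Gini_Wasserstein} is automatic and it suffices to prove $W_1({\bf p},\delta_\mu) \leq 2\,\mu\,G[{\bf p}]$.

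First I would compute $W_1({\bf p},\delta_\mu)$ exactly. Since $\delta_\mu$ has a single atom, the only coupling between ${\bf p}$ and $\delta_\mu$ is the plan $\pi(i,j) = p_i\,\mathbbm{1}_{\{j=\mu\}}$ (all mass of ${\bf p}$ must be transported to $\mu$), which gives the identity
\begin{equation*}
W_1({\bf p},\delta_\mu) = \sum_{i \in \mathbb{N}} |i-\mu|\,p_i ;
\end{equation*}
this is finite because $\sum_i |i-\mu|\,p_i \leq \mu + \sum_i i\,p_i = 2\mu < \infty$. (The same identity follows from the CDF representation $W_1({\bf p},{\bf q}) = \int_{\mathbb{R}}|F_{\bf p}-F_{\bf q}|$, or from Kantorovich--Rubinstein duality with the $1$-Lipschitz test function $i \mapsto |i-\mu|$.) It then remains to bound this quantity by the double sum $2\mu\,G[{\bf p}] = \sum_{i,j}|i-j|\,p_i\,p_j$.

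The one genuine step is a convexity estimate on that double sum. Using $\mu = \sum_j j\,p_j$ together with the triangle inequality $\sum_j |i-j|\,p_j \geq \bigl|\sum_j (i-j)\,p_j\bigr|$ for each fixed $i$, I would write
\begin{equation*}
2\mu\,G[{\bf p}] = \sum_{i\in\mathbb{N}} p_i \sum_{j\in\mathbb{N}} |i-j|\,p_j \;\geq\; \sum_{i\in\mathbb{N}} p_i\,\Bigl|\,i - \sum_{j\in\mathbb{N}} j\,p_j\,\Bigr| \;=\; \sum_{i\in\mathbb{N}} |i-\mu|\,p_i \;=\; W_1({\bf p},\delta_\mu),
\end{equation*}
which is exactly \eqref{eq:Gini_Wasserstein}. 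Probabilistically this is Jensen's inequality: if $X,Y$ are i.i.d.\ with law ${\bf p}$, then $2\mu\,G[{\bf p}] = \EE\,|X-Y| = \EE_X\bigl[\EE_Y|X-Y|\bigr] \geq \EE_X\,|X-\EE Y| = \EE\,|X-\mu|$. There is no serious obstacle; the only points needing care are the legitimacy of interchanging the order of summation and of pulling the triangle inequality inside an infinite sum (harmless, since every term is non-negative and the total is dominated by $2\mu$), and, if one wants to comment on optimality, the observation that the constant $2\mu$ cannot be improved by this argument because the inequality becomes an equality for distributions supported symmetrically about $\mu$ on two points (e.g.\ on $\{0,2\mu\}$).
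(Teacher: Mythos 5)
Your proof is correct, and it reaches the paper's bound by a more self-contained route. The paper also reduces matters to showing $\mathbb{E}|X-\mu| \leq 2\mu\,G[{\bf p}]$ (using the CDF representation of $W_1$ and the fact that ${\bf p}^*=\delta_\mu$, $G[{\bf p}^*]=0$ when $\mu\in\mathbb{N}_+$), but it obtains that bound by invoking an inequality imported from \cite{cao_iterative_2024}, namely $\mu\,G[{\bf p}] \geq \max\bigl\{\sum_{j\geq \floor*{\mu}+1}(j-\mu)\,p_j,\ \sum_{i\leq \floor*{\mu}}(\mu-i)\,p_i\bigr\}$; since the mean constraint forces the two one-sided sums to be equal (each is $\tfrac12\,\mathbb{E}|X-\mu|$), that cited inequality is, in the integer case, exactly your conditional Jensen step $\mathbb{E}|X-Y| \geq \mathbb{E}\bigl|X-\mathbb{E}Y\bigr|$. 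So you replace an external citation by a one-line convexity argument, which is arguably cleaner and makes the proposition self-contained; the paper's version of the key inequality has the added value of being the tool reused (in spirit) for the non-integer case, and the paper additionally records a weaker bound $W_1({\bf p},\delta_\mu)\leq 2\sqrt{2}\,\mu\sqrt{G[{\bf p}]}$ via $\Var[\sqrt{X}]$, which you rightly do not need. Your side remarks (uniqueness of the coupling against a Dirac, finiteness of $\mathbb{E}|X-\mu|$, and sharpness at the two-point law on $\{0,2\mu\}$) are all accurate.
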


\begin{proof}
We first establish a weaker version of the bound \eqref{eq:Gini_Wasserstein} using a rather elementary (probabilistic) approach, which reads as
\begin{equation}\label{eq:weak_bound}
W_1({\bf p},\delta_\mu) \leq 2\,\sqrt{2}\,\mu\,\sqrt{G[{\bf p}]}.
\end{equation}
Indeed, let $X \sim {\bf p}$ be a $\mathbb{N}$-valued random variable distributed according to the law ${\bf p}$. Then
\begin{equation}\label{eq:weak_1}
\begin{aligned}
W_1({\bf p},\delta_\mu) &= \mathbb{E}|X-\mu| = \mathbb{E}\left[|\sqrt{X}-\sqrt{\mu}|\cdot|\sqrt{X}+\sqrt{\mu}|\right] \\
&\leq \sqrt{\mathbb{E}|\sqrt{X}-\sqrt{\mu}|^2}\,\sqrt{\mathbb{E}|\sqrt{X}+\sqrt{\mu}|^2} \\
&\leq 2\,\sqrt{2}\,\sqrt{\mu}\,\sqrt{\Var[\sqrt{X}]},
\end{aligned}
\end{equation}
where the last inequality follows from Jensen's inequality $\mathbb{E}\sqrt{X} \leq \sqrt{\mathbb{E}X} = \sqrt{\mu}$ together with the observation that $\Var[\sqrt{X}] = \mu - \left(\mathbb{E}\sqrt{X}\right)^2$. On the other hand, we also have
\begin{equation}\label{eq:weak_2}
\begin{aligned}
2\,\mu\,G[{\bf p}] = \sum_{i\geq 0}\sum_{j \geq 0} |i-j|\,p_i\,p_j &\geq \sum_{i\geq 0}\sum_{j \geq 0} |\sqrt{i}-\sqrt{j}|^2\,p_i\,p_j \\
&=\sum_{i\geq 0}\sum_{j \geq 0} (i+j-2\,\sqrt{i}\,\sqrt{j})\,p_i\,p_j \\
&= 2\,\mu - 2\,\left(\mathbb{E}\sqrt{X}\right)^2 = 2\,\Var[\sqrt{X}].
\end{aligned}
\end{equation}
Assembling \eqref{eq:weak_1} and \eqref{eq:weak_2} together leads us to the advertised bound \eqref{eq:weak_bound}. In order to achieve the improved estimate provided by \eqref{eq:Gini_Wasserstein}, we make use of a generic result on the Gini index established in the recent work \cite{cao_iterative_2024}, that is, for all ${\bf p} \in \mathcal{V}_\mu$ with $\mu \in \mathbb{R}_+$ we have
\begin{equation}\label{eq:key}
\mu\,G[{\bf p}] \geq \max\left\{\sum_{j \geq \floor*{\mu}+1} (j-\mu)\,p_j, \sum_{i\leq \floor*{\mu}} (\mu-i)\,p_i\right\}.
\end{equation}
Let $F_n = \sum_{m\leq n} p_m$ for all $n\in \mathbb N$ and $F_{-1} \coloneqq 0$ to be the cumulative distribution function associated to the probability mass function ${\bf p}$, we deduce that
\begin{equation*}
\begin{aligned}
W_1({\bf p},\delta_\mu) &= \sum_{n\leq \mu-1} F_n + \sum_{n\geq \mu} (1-F_n) = \sum_{m\leq \mu-1} \sum_{n=m}^{\mu-1} p_m + \sum_{m > \mu} \sum_{n=\mu}^{m-1} p_m \\
&= \sum_{m\leq \mu-1} (\mu-m)\,p_m + \sum_{m>\mu} (m-\mu)\,p_m \leq 2\,\mu\,G[{\bf p}],
\end{aligned}
\end{equation*}
which completes the proof.
\end{proof}

\begin{remark}
Under the settings of Proposition \ref{prop:1} we can also obtain an upper bound on the Gini index $G[{\bf p}]$ in terms of the Wasserstein distance $W_1$ between ${\bf p} \in \mathcal{V}_\mu$ and the Dirac mass $\delta_\mu$. Indeed, we recall that an alternative definition of the Gini index $G[{\bf p}]$ is provided by
\begin{equation}\label{def2:Gini}
G[{\bf p}] = \frac{1}{2\,\mu}\,\mathbb{E}|X-X'|,
\end{equation}
where $X$ and $X'$ are i.i.d. random variables distributed according to ${\bf p} \in \mathcal{V}_\mu$. Thanks to the triangle inequality we deduce that \begin{equation}\label{eq:Wasserstein_Gini}
2\,\mu\,G[{\bf p}] = \mathbb{E}|X-X'| \leq 2\,\mathbb{E}|X-\mu| = 2\,W_1({\bf p},\delta_\mu).
\end{equation}
\end{remark}

Now, as long as $\mu \in \mathbb{R}_+ \setminus \mathbb{N}_+$ is not an integer, the situation becomes more complicated since ${\bf p}^* \neq \delta_\mu$ and
\begin{equation}\label{eq:Gini_equilibrium}
G[{\bf p}^*] = \frac{1}{\mu}\,p^*_{\floor*{\mu}}\,p^*_{\floor*{\mu}+1} = \frac{1}{\mu}\,(1-\mu + \floor*{\mu})\,(\mu - \floor*{\mu}) > 0.
\end{equation}
We first treat a special and simpler case where $\mu \in (0,1)$, so that $\floor*{\mu} = 0$ and $G[{\bf p}^*] = 1-\mu$. In this case, we have
\begin{equation*}
\begin{aligned}
W_1({\bf p},{\bf p}^*) &= |p_0-(1-\mu)| + \sum_{n\geq 1} (1-F_n) = \mu-(1-p_0) + \sum_{n\geq 1} \sum_{k\geq n+1} p_k \\
&= \mu-(1-p_0) + \sum_{k\geq 2} \sum_{1\leq n\leq k-1} p_k = \mu-(1-p_0) + \sum_{k\geq 2} (k-1)\,p_k \\
&= \mu-(1-p_0) + \sum_{k\geq 1} (k-1)\,p_k = 2\,[\mu - (1-p_0)].
\end{aligned}
\end{equation*}
On the other hand, we also have
\begin{equation*}
G[{\bf p}] = \frac{1}{2\,\mu} \sum\limits_{i\geq 0}\sum\limits_{j\geq 0} |i-j|\,p_i\,p_j \geq \frac{1}{\mu}\,p_0\,\sum\limits_{j\geq 0} j\,p_j = p_0,
\end{equation*}
leading us to
\begin{equation*}
G[{\bf p}] - G[{\bf p}^*] \geq p_0 - (1-\mu) = \mu-(1-p_0) = \frac 12\,W_1({\bf p},{\bf p}^*).
\end{equation*}
In summary, we have proved the following result:
\begin{lemma}\label{lem:1}
Assume that ${\bf p} \in \mathcal{V}_\mu$ with $\mu \in (0,1)$, then
\begin{equation}\label{eq:Gini_Wasserstein_mu01}
W_1({\bf p},{\bf p}^*) \leq 2\,\left(G[{\bf p}] - G[{\bf p}^*]\right).
\end{equation}
\end{lemma}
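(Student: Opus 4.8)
The plan is to mimic the cumulative-distribution-function computation used in the proof of Proposition \ref{prop:1}, now comparing ${\bf p}$ with the two-point distribution ${\bf p}^*$ supported on $\{0,1\}$ (recall $\floor{\mu}=0$ here) rather than with a single Dirac mass. Writing $F_n=\sum_{m\le n}p_m$ for the CDF of ${\bf p}$ and noting that the CDF of ${\bf p}^*$ equals $1-\mu$ at $n=0$ and $1$ for every $n\ge 1$, the order-one Wasserstein distance on $\mathbb N$ is the $\ell^1$ distance between the two CDFs, so
\[
W_1({\bf p},{\bf p}^*)=|F_0-(1-\mu)|+\sum_{n\ge 1}(1-F_n).
\]
First I would dispose of the absolute value: since $\mu=\sum_{n\ge 1}n\,p_n\ge \sum_{n\ge 1}p_n=1-p_0$, we have $F_0=p_0\ge 1-\mu$, hence $|F_0-(1-\mu)|=p_0-(1-\mu)=\mu-(1-p_0)$. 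For the tail sum I would exchange the order of summation, $\sum_{n\ge 1}(1-F_n)=\sum_{n\ge 1}\sum_{k\ge n+1}p_k=\sum_{k\ge 2}(k-1)\,p_k=\sum_{k\ge 1}(k-1)\,p_k=\mu-(1-p_0)$, which gives the clean identity $W_1({\bf p},{\bf p}^*)=2\,[\mu-(1-p_0)]$.

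Next I would produce a matching one-sided lower bound for the Gini excess. Keeping only the terms with $i=0$ or $j=0$ in the double sum \eqref{def1:Gini} and using $\sum_{j\ge 0}j\,p_j=\mu$,
\[
G[{\bf p}]=\frac{1}{2\,\mu}\sum_{i\ge 0}\sum_{j\ge 0}|i-j|\,p_i\,p_j\;\ge\;\frac{1}{\mu}\,p_0\sum_{j\ge 0}j\,p_j\;=\;p_0 .
\]
Combining this with the explicit value $G[{\bf p}^*]=\frac{1}{\mu}\,p^*_0\,p^*_1=1-\mu$ — which is the $\floor{\mu}=0$ instance of \eqref{eq:Gini_equilibrium} — yields
$G[{\bf p}]-G[{\bf p}^*]\ge p_0-(1-\mu)=\mu-(1-p_0)=\tfrac12\,W_1({\bf p},{\bf p}^*)$, i.e.\ exactly \eqref{eq:Gini_Wasserstein_mu01}.

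Main obstacle: there is in fact very little obstruction once one commits to the CDF representation of $W_1$; the only two points requiring care are (i) checking the sign $p_0\ge 1-\mu$ so that the absolute value in $W_1$ opens in the favourable direction, and (ii) verifying that the crude estimate $G[{\bf p}]\ge p_0$ is already tight enough — which happens precisely because ${\bf p}^*$ is concentrated on $\{0,1\}$ with $G[{\bf p}^*]=1-\mu$, so no information beyond the mass at $0$ and the fixed mean is needed. I would also keep in mind that the general non-integer case with $\mu>1$ will require the sharper lower bound \eqref{eq:key} in place of the elementary inequality $G[{\bf p}]\ge p_0$, so this lemma is best regarded as the warm-up special case of Theorem \ref{thm:main}.
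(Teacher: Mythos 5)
Your argument is correct and is essentially the paper's own proof: the same CDF identity $W_1({\bf p},{\bf p}^*)=2\,[\mu-(1-p_0)]$, the same elementary bound $G[{\bf p}]\ge p_0$ obtained from the $i=0$ or $j=0$ terms, and the same value $G[{\bf p}^*]=1-\mu$. Your explicit verification that $p_0\ge 1-\mu$ (so the absolute value opens favourably) is a small point the paper leaves implicit, but otherwise the two proofs coincide.
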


We now generalize the strategy behind the proof of Lemma \eqref{lem:1} to a generic $\mu \in \mathbb{R}_+ \setminus \mathbb{N}_+$. The proof of the following proposition, together with the proof of Proposition \ref{prop:1} given above, allows us to conclude the proof of Theorem \ref{thm:main}.
\begin{proposition}\label{prop:2}
Assume that ${\bf p} \in \mathcal{V}_\mu$ with $\mu \in \mathbb{R}_+ \setminus \mathbb{N}_+$, then
\begin{equation}\label{eq:Gini_Wasserstein_general}
W_1({\bf p},{\bf p}^*) \leq \frac{2\,\mu}{\min\{\mu-\floor*{\mu},\floor*{\mu}+1-\mu\}}\,\left(G[{\bf p}] - G[{\bf p}^*]\right).
\end{equation}
\end{proposition}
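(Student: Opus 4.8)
The plan is to follow the same scheme as in the proof of Lemma~\ref{lem:1}, but to replace the elementary bound $G[{\bf p}]\ge p_0$ used there by the sharp inequality \eqref{eq:key}, exploited in its two equivalent forms. Write $m=\floor*{\mu}$ and $\alpha=\mu-m\in(0,1)$, and set $c\coloneqq\min\{\alpha,1-\alpha\}=\min\{\mu-\floor*{\mu},\floor*{\mu}+1-\mu\}\in(0,\tfrac12]$. Let $F_n=\sum_{k\le n}p_k$ and put
\[
S_-\coloneqq\sum_{n=0}^{m-1}F_n,\qquad S_+\coloneqq\sum_{n\ge m+1}(1-F_n),\qquad t\coloneqq F_m-(1-\alpha).
\]
Since the cumulative distribution function of ${\bf p}^*$ equals $0$ for $n<m$, equals $1-\alpha$ at $n=m$, and equals $1$ for $n>m$, the representation of $W_1$ through cumulative distribution functions gives $W_1({\bf p},{\bf p}^*)=S_-+|t|+S_+$; and $\mu\,G[{\bf p}^*]=\alpha(1-\alpha)$ by \eqref{eq:Gini_equilibrium}. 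Hence it suffices to prove
\begin{equation}\label{eq:reduced-claim}
\mu\,\bigl(G[{\bf p}]-G[{\bf p}^*]\bigr)\ \ge\ \tfrac{c}{2}\,W_1({\bf p},{\bf p}^*),
\end{equation}
because multiplying \eqref{eq:reduced-claim} by $2/c$ produces exactly \eqref{eq:Gini_Wasserstein_general}.

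The next step is to turn \eqref{eq:key} into two usable lower bounds. As ${\bf p}\in\mathcal{V}_\mu$ has mean exactly $\mu$, the two sums in \eqref{eq:key} coincide, $\sum_{j\ge m+1}(j-\mu)\,p_j=\sum_{i\le m}(\mu-i)\,p_i=:A$, and a short Abel summation rewrites $A$ in two ways, namely $A=S_++(1-\alpha)(1-F_m)$ and $A=S_-+\alpha\,F_m$. Subtracting $\mu\,G[{\bf p}^*]=\alpha(1-\alpha)$ and using $\mu\,G[{\bf p}]\ge A$, these identities become
\[
\mu\bigl(G[{\bf p}]-G[{\bf p}^*]\bigr)\ \ge\ S_+-(1-\alpha)\,t,
\qquad
\mu\bigl(G[{\bf p}]-G[{\bf p}^*]\bigr)\ \ge\ S_-+\alpha\,t .
\]
Each of these controls only one of $S_-,S_+$ and carries a term in $t$ of a single sign, so neither alone yields \eqref{eq:reduced-claim}; the idea is to take a convex combination whose weights are chosen according to the sign of $t$.

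If $t\ge0$, I would average the two inequalities with weights $\tfrac{c}{2}$ and $1-\tfrac{c}{2}$ (admissible since $c\in(0,\tfrac12]$): the coefficient of $S_+$ is $\tfrac c2$, that of $S_-$ is $1-\tfrac c2\ge\tfrac c2$, and that of $t$ is $\alpha-\tfrac c2\ge\tfrac c2$ because $c\le\alpha$, so the resulting bound dominates $\tfrac c2(S_-+S_++t)=\tfrac c2\,W_1$. If $t<0$, I would instead use weights $1-\tfrac c2$ and $\tfrac c2$: the coefficient of $t$ becomes $\tfrac c2-(1-\alpha)$, which is $\le0$ since $c\le1-\alpha$, and $\bigl[\tfrac c2-(1-\alpha)\bigr]\,t=\bigl[(1-\alpha)-\tfrac c2\bigr]\,|t|\ge\tfrac c2\,|t|$, again because $c\le1-\alpha$; combined with the coefficients $1-\tfrac c2\ge\tfrac c2$ of $S_+$ and $\tfrac c2$ of $S_-$ this once more gives $\tfrac c2\,W_1$. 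In both cases \eqref{eq:reduced-claim} holds, which proves Proposition~\ref{prop:2} and, together with Proposition~\ref{prop:1}, completes the proof of Theorem~\ref{thm:main}.

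The only delicate point is this last balancing: no single fixed weighting absorbs $t$ for both signs, and the two sign-dependent choices are simultaneously feasible precisely because $c=\min\{\alpha,1-\alpha\}$ (the feasibility constraints boil down to $c\le\alpha$ and $c\le1-\alpha$). It is also worth checking the two Abel-summation identities for $A$ and the boundary case $m=0$, where $S_-=0$ and the argument reproduces — with a possibly larger constant — the $\mu\in(0,1)$ bound of Lemma~\ref{lem:1}.
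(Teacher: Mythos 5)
Your proof is correct, and while it follows the same overall skeleton as the paper's (an exact CDF representation of $W_1({\bf p},{\bf p}^*)$ compared against a lower bound on $\mu\,(G[{\bf p}]-G[{\bf p}^*])$ with the constant $C_\mu=\min\{\mu-\floor*{\mu},\floor*{\mu}+1-\mu\}$), the way you obtain the Gini-gap lower bound is genuinely different. The paper works directly with the double-sum definition of $G$: it restricts to indices $n\le\floor*{\mu}$, keeps the cross terms $2\sum_{\ell\le n-1}(n-\ell)\,p_\ell$, and after rearranging with \eqref{eq:Gini_equilibrium} arrives at $\mu\,(G[{\bf p}]-G[{\bf p}^*])\ge C_\mu\,(B_1+B_2)$, where $B_1$ and $B_2$ are exactly the two expressions whose maximum bounds $W_1/2$ (the sign of $t=F_{\floor*{\mu}}-(\floor*{\mu}+1-\mu)$ only enters the $W_1$ identity, not the Gini estimate). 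You instead invoke the quoted inequality \eqref{eq:key} (which the paper deploys only for the integer case in Proposition \ref{prop:1}), use mean conservation to identify its two sides with a single quantity $A$, Abel-sum $A$ in the two forms $A=S_-+\alpha F_m=S_++(1-\alpha)(1-F_m)$, and then absorb the sign-ambiguous term $t$ by a sign-dependent convex combination with weights $c/2$ and $1-c/2$; I checked the coefficient computations ($\alpha-\tfrac c2$ and $\tfrac c2-(1-\alpha)$) and the feasibility constraints $c\le\alpha$, $c\le 1-\alpha$, and the case analysis closes correctly, including $\floor*{\mu}=0$. What your route buys: it is shorter, it makes transparent where the constant $C_\mu$ comes from (precisely the weight-feasibility constraints), it unifies the integer and non-integer cases around the single ingredient \eqref{eq:key} (which moreover admits a one-line self-contained proof, by dropping the negative terms $j<i$ in $\sum_j (j-i)\,p_j$, so the reliance on the cited work is harmless), and it still yields $G[{\bf p}]\ge G[{\bf p}^*]$ as a by-product since your reduced claim gives $\mu\,(G[{\bf p}]-G[{\bf p}^*])\ge \tfrac c2\,W_1\ge 0$. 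What the paper's direct expansion buys is independence from the external inequality in this part and a single estimate valid without splitting on the sign of $t$.
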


\begin{proof}
The proof consists of establishing a suitable upper bound on $W_1({\bf p},{\bf p}^*)$ along with a comparable lower bound for $G[{\bf p}] - G[{\bf p}^*]$. We first prove the following upper bound on $W_1({\bf p},{\bf p}^*)$:
\begin{equation}\label{eq:W1_upper_bound}
W_1({\bf p},{\bf p}^*) \leq 2\,\max\left\{\sum\limits_{n\leq \floor*{\mu}} (\floor*{\mu}+1-n)\,p_n - (\floor*{\mu}+1-\mu), \sum\limits_{n\leq \floor*{\mu}} (\floor*{\mu}-n)\,p_n\right\}.
\end{equation}
Indeed, we can write
\begin{equation*}
\begin{aligned}
W_1({\bf p},{\bf p}^*) &= \sum_{n\leq \floor*{\mu}-1} F_n + \left|F_{\floor*{\mu}} - (\floor*{\mu}+1-\mu)\right| + \sum_{n\geq \floor*{\mu}+1} (1-F_n) \\
&= \sum_{n\leq \floor*{\mu}-1} (\floor*{\mu}-n)\,p_n + \left|F_{\floor*{\mu}} - (\floor*{\mu}+1-\mu)\right| + \sum_{n\geq \floor*{\mu}+1} \sum_{m\geq n+1}p_m\\
&= \sum_{n\leq \floor*{\mu}-1} (\floor*{\mu}-n)\,p_n + \left|F_{\floor*{\mu}} - (\floor*{\mu}+1-\mu)\right| + \sum_{m \geq \floor*{\mu}+1} (m-\floor*{\mu}-1)\,p_m
\end{aligned}
\end{equation*}
and notice that the last summand can be recast as follows:
\begin{equation*}
\begin{aligned}
\sum_{m \geq \floor*{\mu}+1} (m-\floor*{\mu}-1)\,p_m &= \mu - \sum_{n\leq \floor*{\mu}} n\,p_n - (\floor*{\mu}+1)\,\big(1-\sum_{n\leq \floor*{\mu}} p_n\big) \\
&= \sum_{n\leq \floor*{\mu}} (\floor*{\mu}+1-n)\,p_n - (\floor*{\mu}+1-\mu).
\end{aligned}
\end{equation*}
Therefore, if $p_{\floor*{\mu}} \geq (\floor*{\mu}+1-\mu) - \sum_{n\leq \floor*{\mu}-1} p_n$ so that $F_{\floor*{\mu}} \geq (\floor*{\mu}+1-\mu)$, then
\begin{equation}\label{eq:W1_e1}
W_1({\bf p},{\bf p}^*) = 2\,\left[\sum_{n\leq \floor*{\mu}} (\floor*{\mu}+1-n)\,p_n - (\floor*{\mu}+1-\mu)\right].
\end{equation}
Otherwise if $p_{\floor*{\mu}} < (\floor*{\mu}+1-\mu) - \sum_{n\leq \floor*{\mu}-1} p_n$ so that $F_{\floor*{\mu}} < (\floor*{\mu}+1-\mu)$, then
\begin{equation}\label{eq:W1_e2}
W_1({\bf p},{\bf p}^*) = 2\,\sum_{n\leq \floor*{\mu}} (\floor*{\mu}-n)\,p_n.
\end{equation}
Combining \eqref{eq:W1_e1} and \eqref{eq:W1_e2} then gives rise to the estimate \eqref{eq:W1_upper_bound}. We now claim that
\begin{equation}\label{eq:Gini_lower_bound}
\mu\,\left(G[{\bf p}] - G[{\bf p}^*]\right) \geq C_\mu\,\left(\sum\limits_{n\leq \floor*{\mu}} (\floor*{\mu}+1-n)\,p_n - (\floor*{\mu}+1-\mu)+ \sum\limits_{n\leq \floor*{\mu}} (\floor*{\mu}-n)\,p_n\right)
\end{equation}
with $C_\mu \coloneqq \min\{\mu-\floor*{\mu},\floor*{\mu}+1-\mu\}$, which is sufficient to derive the advertised bound \eqref{eq:Gini_Wasserstein_general} by taking into account of the estimate \eqref{eq:W1_upper_bound}. To establish \eqref{eq:Gini_lower_bound} we bound $\mu\,G[{\bf p}]$ as follows:
\begin{equation*}
\begin{aligned}
\mu\,G[{\bf p}] &\geq \sum_{n\leq \floor*{\mu}} p_n\,\sum_{j\geq 0} |j-n|\,p_j \\
&= \sum_{n\leq \floor*{\mu}-1} p_n\,\sum_{j\geq 0} |j-n|\,p_j + p_{\floor*{\mu}}\,\left(\mu-\floor*{\mu}+ 2\,\sum_{\ell \leq \floor*{\mu}-1} (\floor*{\mu}-\ell)\,p_\ell\right) \\
&= \sum_{n\leq \floor*{\mu}-1} p_n\,\left(\mu-n + 2\,\sum_{\ell \leq n-1} (n-\ell)\,p_\ell\right) + p_{\floor*{\mu}}\,\left(\mu-\floor*{\mu}+ 2\,\sum_{\ell \leq \floor*{\mu}-1} (\floor*{\mu}-\ell)\,p_\ell\right) \\
&\geq \sum_{n\leq \floor*{\mu}-1} p_n\,(\mu-n) + p_{\floor*{\mu}}\,\left(\mu-\floor*{\mu}+ 2\,\sum_{\ell \leq \floor*{\mu}-1} (\floor*{\mu}-\ell)\,p_\ell\right).
\end{aligned}
\end{equation*}
Due to the expression of $G[{\bf p}^*]$ \eqref{eq:Gini_equilibrium}, we obtain
\begin{equation*}
\begin{aligned}
\mu\,\left(G[{\bf p}] - G[{\bf p}^*]\right) &\geq (\mu-\floor*{\mu})\,\left(p_{\floor*{\mu}}-(\floor*{\mu}+1-\mu)\right) + \sum_{n\leq \floor*{\mu}-1} p_n\,(\mu-n) \\
&=(\mu-\floor*{\mu})\,\left(\sum_{n\leq \floor*{\mu}} (\floor*{\mu}+1-n)\,p_n - (\floor*{\mu}+1-\mu)\right)\\
&\qquad + \sum_{n\leq \floor*{\mu}-1} p_n\,(\mu-n) - \sum_{n\leq \floor*{\mu}-1} (\mu-\floor*{\mu})\,(\floor*{\mu}+1-n)\,p_n \\
&= (\mu-\floor*{\mu})\,\left(\sum_{n\leq \floor*{\mu}} (\floor*{\mu}+1-n)\,p_n - (\floor*{\mu}+1-\mu)\right) \\
&\qquad + (\floor*{\mu}+1-\mu)\,\sum_{n\leq \floor*{\mu}-1} (\floor*{\mu}-n)\,p_n \\
&\geq C_\mu\,\left(\sum\limits_{n\leq \floor*{\mu}} (\floor*{\mu}+1-n)\,p_n - (\floor*{\mu}+1-\mu)+ \sum\limits_{n\leq \floor*{\mu}} (\floor*{\mu}-n)\,p_n\right),
\end{aligned}
\end{equation*}
whence the proof is completed.
\end{proof}

\begin{remark}
We emphasize that the proof of Proposition \ref{prop:2} implies $G[{\bf p}] \geq G[{\bf p}^*]$, which yields an alternative (and direct) proof of the variational characterization ${\bf p}^* = \argmin_{{\bf p} \in \mathcal{V}_\mu} G[{\bf p}]$ of the distribution ${\bf p}^*$ proved in a recent work \cite{cao_iterative_2024} (through proof by contradiction).
\end{remark}

\subsection{Proof of Theorem \ref{thm:2}}

We record the proof of Theorem \ref{thm:2} in this section. The key ingredient lies in the following alternative expression for the Gini index \eqref{def1:Gini}.

\begin{lemma}\label{lem:Gini_alternative_def}
Assume that ${\bf p} \in \mathcal{V}_\mu$. Then
\begin{equation}\label{eq:Gini_expression2}
G[{\bf p}] = 1 - \frac{1}{\mu}\,\sum\limits_{n\geq 0} (1-F_n)^2
\end{equation}
\end{lemma}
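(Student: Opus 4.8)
The plan is to express the sum $\sum_{i,j} |i-j|\,p_i p_j$ in terms of the tail probabilities $1-F_n$ and then recognize the resulting expression. The starting point is the classical identity $|i-j| = \sum_{n \ge 0} \mathbbm{1}_{\{i \le n < j\}} + \mathbbm{1}_{\{j \le n < i\}}$, valid for non-negative integers $i,j$; equivalently, $|i-j|$ counts the integers $n$ with $\min(i,j) \le n < \max(i,j)$. Summing this against $p_i p_j$ over all $i,j$ and swapping the order of summation, the coefficient of each $n$ becomes $2\,\mathbb{P}(X \le n)\,\mathbb{P}(X > n) = 2\,F_n(1-F_n)$, where $X \sim {\bf p}$. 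Hence $2\mu\,G[{\bf p}] = \sum_{i,j}|i-j|\,p_ip_j = 2\sum_{n\ge 0} F_n(1-F_n)$.

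Next I would rewrite $F_n(1-F_n) = (1-F_n) - (1-F_n)^2$, so that $\mu\,G[{\bf p}] = \sum_{n\ge 0}(1-F_n) - \sum_{n\ge 0}(1-F_n)^2$. The first sum is the standard tail-sum formula for the mean of an $\mathbb{N}$-valued random variable: $\sum_{n\ge 0}(1-F_n) = \sum_{n\ge 0}\mathbb{P}(X > n) = \mathbb{E}[X] = \mu$. Substituting gives $\mu\,G[{\bf p}] = \mu - \sum_{n\ge 0}(1-F_n)^2$, and dividing by $\mu$ yields \eqref{eq:Gini_expression2}.

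An alternative route, avoiding the indicator identity, is to start from the cumulative-distribution representation of $W_1$-type quantities already used in the paper: one knows $\frac{1}{2}\sum_{i,j}|i-j|\,p_ip_j = \sum_{n\ge 0}\big(\text{something}\big)$ can be obtained by writing $|i-j|$ as a telescoping difference of $F$'s, but the indicator approach is cleaner and is the one I would present. The main (minor) obstacle is simply justifying the interchange of the double/triple summation and the tail-sum formula $\mathbb{E}[X] = \sum_n \mathbb{P}(X>n)$; both are legitimate because all terms are non-negative, so Tonelli's theorem applies with no integrability caveats, and one should remark that if $\mu = \infty$ the identity holds trivially in $[0,\infty]$ (though in our setting ${\bf p} \in \mathcal{V}_\mu$ forces $\mu < \infty$). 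No genuinely hard step is expected here; the content is entirely in recognizing the two standard identities.
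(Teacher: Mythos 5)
Your proof is correct, and it takes a genuinely different route from the paper's. The paper works directly from the definition: it restricts the double sum to the off-diagonal part $n>m$, rewrites the two resulting sums in terms of $F_{n-1}$ and $1-F_n$, and concludes via the telescoping identity $2\,p_n\,(1-F_n) + p_n^2 = (1-F_{n-1})^2 - (1-F_n)^2$ together with an (implicit) summation-by-parts step, which silently uses that $n\,(1-F_n)^2 \to 0$. You instead use the layer-cake decomposition $|i-j| = \sum_{n\ge 0}\bigl(\mathbbm{1}_{\{i\le n<j\}} + \mathbbm{1}_{\{j\le n<i\}}\bigr)$, which after Tonelli gives the classical representation $\mu\,G[{\bf p}] = \sum_{n\ge 0} F_n\,(1-F_n)$ of the Gini mean difference, and then the tail-sum formula $\sum_{n\ge 0}(1-F_n) = \mu$. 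Your splitting of $\sum_{n} F_n(1-F_n)$ into $\sum_n(1-F_n) - \sum_n(1-F_n)^2$ is legitimate because $0\le (1-F_n)^2 \le 1-F_n$ and $\sum_n(1-F_n)=\mu<\infty$, a point you correctly flag. What your route buys is transparency: every interchange is covered by non-negativity, and the intermediate identity $\mu\,G[{\bf p}]=\sum_n F_n(1-F_n)$ is of independent interest (it is the discrete analogue of the usual CDF formula for the Gini index). What the paper's route buys is that it stays entirely within elementary rearrangements of the defining double sum, at the cost of a less illuminating telescoping manipulation.
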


\begin{proof}
We start from the definition of the Gini index \eqref{def1:Gini} and proceed as follows:
\begin{align*}
G[{\bf p}] &= \frac{1}{2\,\mu}\sum_{m\geq 0}\sum_{n\geq 0} |m-n|\,p_m\,p_n = \frac{1}{\mu}\sum_{m\geq 0}\sum_{n\geq m+1} (n-m)\,p_m\,p_n \\
&= \frac{1}{\mu}\sum_{m\geq 0}\sum_{n\geq m+1} n\,p_m\,p_n - \frac{1}{\mu}\sum_{m\geq 0}\sum_{n\geq m+1} m\,p_m\,p_n \\
&= \frac{1}{\mu}\sum_{n\geq 0} n\,p_n\,F_{n-1} - \frac{1}{\mu}\sum_{n\geq 0} n\,p_n\,(1-F_n) \\
&= \frac{1}{\mu}\left[\mu - 2\,\sum_{n\geq 0} n\,p_n\,(1-F_n) - \sum_{n\geq 0} n\,p^2_n\right] \\
&= 1 - \frac{1}{\mu}\,\sum\limits_{n\geq 0} (1-F_n)^2,
\end{align*}
where the last identity follows from the elementary observation that $2\,p_n\,(1-F_n) + p^2_n = (1-F_{n-1})^2 - (1-F_n)^2$ for all $n \in \mathbb N$. The aforementioned computations allows us to conclude the proof.
\end{proof}

\begin{proof}[Proof of Theorem \ref{thm:2}]
Thanks to Lemma \ref{lem:Gini_alternative_def}, we have
\begin{equation*}
1 - G[{\bf p}] = \frac{1}{\mu}\,\sum\limits_{n\geq 0} (1-F_n)^2 \geq \frac{1}{\mu}\,(1-F_0)^2 = \frac{1}{\mu}\,(1-p_0)^2,
\end{equation*}
from which we deduce that $1-p_0 \leq \sqrt{\mu\,(1 - G[{\bf p}])}$. Consequently,
\begin{equation*}
\|{\bf p} - \delta_0\|_{\ell^1} = (1-p_0) + \sum_{n\geq 1} p_n = 2\,(1-p_0) \leq 2\,\sqrt{\mu}\,\sqrt{1-G[{\bf p}]},
\end{equation*}
which finishes the proof of Theorem \ref{thm:2}.
\end{proof}

\begin{remark}
We remark here that under the settings of Theorem \ref{thm:2}, the Wasserstein distance $W_1({\bf p},\delta_0)$ between ${\bf p}$ and $\delta_0$ does not converge to zero when $G[{\bf p}] \to 1$. Indeed, we have
\begin{equation}\label{eq:W1_non_conv}
W_1({\bf p},\delta_0) = \sum_{n\geq 0} (1-F_n) = \sum_{n\geq 0}\sum_{m\geq n+1} p_m = \sum_{m\geq 1} m\,p_m = \mu
\end{equation}
whenever ${\bf p} \in \mathcal{V}_\mu$.
\end{remark}

\section{Conclusion}

In this manuscript, we investigated the implication of the convergence of Gini index (to its equilibrium value) shown numerically or analytically for solutions of certain infinite dimensional ODE systems (obtained from the mean-field analysis of certain stochastic agent-based models encountered in the econophysics and sociophysics literature) on the convergence of the underlying probability distributions. To the best of our knowledge, despite of the popularity and utility of the Gini index in the analysis of ODE systems of the form \eqref{eq:mean-field} and/or some Boltzmann-type kinetic PDEs (where the state space is continuous) \cite{boghosian_h_2015,boghosian_oligarchy_2017,cao_explicit_2021,cohen_bounding_2023}, it is quite surprising that little effort has been dedicated to study the consequence of such convergence at the level of Gini index. It is rather natural, from the application or modeling point of view, to speculate that convergence of Gini index (which often serves as a Lyapunov functional for the underlying evolution equation or system under consideration) should imply (at least) convergence in the sense of distributions. We filled in the aforementioned gap in this work and shed light on the relation between Gini index and other (perhaps much more) popular metrics for quantifying the closeness between probability distributions (such as Wasserstein distances).

To conclude, let us mention a more sophisticated task pertaining to the theme of the present manuscript. For many ODE systems (and Boltzmann-kinetic equations) mentioned throughout this paper, we would like to know the underlying reason that the Gini index will serve as a Lyapunov functional for those dynamics. The problem of construction of appropriate Lyapunov functionals for differential equations or system of differential equations has been subjected to substantial research across different branches of the applied mathematics \cite{hsu_survey_2005,hafstein_algorithm_2009}, and we speculate that the gradient flow approach developed recently for certain mean-field ODE systems on discrete spaces \cite{erbar_gradient_2016} might be helpful to uncover the important role played by the Gini index in the analysis of ODE systems with specific structures.

\end{document}